\newenvironment{myproof}[1][\proofname]{%
  \begin{proof}[#1]$ $\par\nobreak\ignorespaces   
}{%

  \end{proof}
}
\theoremstyle{definition}
\newtheorem{definition}{Definition}[section] %
\newtheorem{example}{Example}[section]
\newtheorem{theorem}{Theorem}[section]
\newcommand*{\titleGM}{\begingroup % Create the command for including the title page in the document
\begin{center}
\hbox{ % Horizontal box
%\hspace*{0.05\textwidth} % Whitespace to the left of the title page
% Vertical line
\hspace*{0.05\textwidth} % Whitespace between the vertical line and title page text
\parbox[b]{0.98\textwidth}
{ % Paragraph box which restricts text to less than the width of the page
{\huge Four-divergence as a paravector operator.

Invariance of the wave equation under
\newline orthogonal paravector transformation. \\[0.5\baselineskip]}
%\\[2\baselineskip] % Title
%{\huge {}}\\[4\baselineskip] % Tagline or further description
\begin{center}
{\Large \textsc{Józef Radomański}} % Author name
\\
e-mail: radomanski@wikakwa.pl
\end{center}
\begin{flushright}
I dedicate all my effort that I have spent on the search for the truth to Eliane Trevisan.
\end{flushright}

\vspace{0.05\textheight} % Whitespace between the title block and the publisher
}}

\end{center}
\begin{abstract}

\textit{The article presents four identities containing the differential operator $\partial^{\pm} =[\partial /\partial t, \pm \nabla ]$ also known as four-divergence. These equations are used to prove the invariance of wave equation under orthogonal paravector transformations. Moreover, the transformation of equations containing operators $\partial^{\pm}$ under the rotation of reference frame has been presented.}
\end{abstract}

Keywords: \textit{Four-divergence, paravector, wave equation, complex space-time, Clifford analysis}

\endgroup}
\begin{document}

\pagestyle{empty} % Removes page numbers

\titleGM % This command includes the title page

\pagestyle{plain} % Removes page numbers

%\newpage

\section*{Introduction}
Although from the very beginning of the creation of the Special Theory of Relativity (STR) researchers tried to use various mathematical tools to describe the relativistic phenomenons, the mainstream of theoretical physics went in the direction of tensor calculus. It is a formalism that requires extensive mathematical knowledge and proficiency. Operations on vectors, even when they are more than 3-dimensional, are conceivable as opposed to the manipulation of rates, which is why many authors still try to describe the Theory of Relativity using more friendly tools. For some time now there have appeared a lot of articles on the STR, whose authors use the quaternion algebra, geometric algebra or paravectors, which proves that the tensor calculus has not been accepted by all as the best language for this branch of physics. 

J.L.Synge \cite{Synge} (1972) tried to use complex quaternions to present Maxwell's equations. At the same time, a lot of articles on relativistic physics were written by David Hestenes \cite{Hestenes} using Grassman Algebra. Following him, but using the paravectors, was \href{http://www1.uwindsor.ca/physics/dr-william-baylis}{William Baylis} who showed that paravectors and multivectors belong to different  representations of the same algebra \cite{Baylis}. Pedagogical experience of Professor William Baylis \cite{Baylis_1} shows that the STR taught in paravector formalism is much faster and more easily absorbed by the students than when he taught it traditionally. This means that this formalism is more intuitive, and therefore it suited better to describe the relativistic phenomena then tensor calculus.

This work is a continuation of my article ,,\href{http://arxiv.org/abs/1601.02965}{Algebra of paravectors}'' where paravectors are shown in such a way that everyone can imagine them as vectors. With the operation of summation paravectors create a unitary space over the field of complex numbers. Although it is a very important difference compared to the commonly applied definition of the scalar product (the product of paravector with itself is not a real positive number but is a complex number!),  the analogy is so clear that I decided to leave the name commonly used in linear algebra. From the point of view of algebra paravectors together with summation and multiplication create a ring with identity, due to which they have some characteristics of numbers.

In the current paper there are shown simple methods of transformation of the expressions containing differential operator under the linear transformation described by paravector. The identities containing the operator of spatio-temporal differentiation will be proved through which, in turn, invariance of wave equation will be shown under orthogonal transformations. Next, I will show how these equations are transformed under an Euclidean rotation. I will not interpret the results from the physics point of view nor will I analyse the physical sense of the domain or complex space. I treat the subject as pure mathematics to be used in the next works undertaking more practical problems. At the moment I can only say that the complex space-time, although similar to the Euclidean one, has a different structure, but is not Minkowski space either. Therefore, the results will come gradually.

Before reading this work, it is recommend to read the article ,,\href{http://arxiv.org/abs/1601.02965}{Algebra of paravectors}''\cite{Radomanski}.
        
\section{The spatio-temporal differential operator.}   

Since we do not know yet what restrictions should be put on the differentiation so it would not be in conflict with physics, we apply the most general assumption: The domain of the paravector function is a complex space-time. It means that both time and space are complex. Having to put some assumptions is evidenced by the fact that the differentiation of time should have other rules than the differentiation of space if only because time doesn't flinch. Determining when, why and what assumptions we need to put requires a thorough examination, but the subject is so vast that it will be discussed gradually  in subsequent publications.

In conclusion of the article ,,\href{http://arxiv.org/abs/1601.02965}{Algebra of paravectors}'' it was mentioned that some paravectors are additive, and others are not, in spite of the same construction. The additive ones are called traditionally: 4-vectors and are denoted with a capital doubled letter or featured as a column matrix in parentheses, for example:

\begin{equation}
\mathbb{X} :=
                            \begin{pmatrix}
                            \Delta t \\ 
                            \Delta \mathbf{x}%
                            \end{pmatrix}%
\end{equation}	

Non-additive paravectors which we can only multiply are denoted by a capital letter or presented as a column array in square brackets, for example:

\begin{equation}
               \Gamma :=
                \begin{bmatrix}
                \alpha \\ 
                \boldsymbol{\beta }
                \end{bmatrix}
                =
                \begin{bmatrix}
                a+id \\ 
                \textbf{b}+i\textbf{c}
                \end{bmatrix}
\end{equation} 
        
\begin{definition}
The \textbf{spatio-temporal differential operator} (or \textbf{4-divergence}) we call the paravector:
\[
\partial :=
\begin{bmatrix}
\frac{\partial }{\partial t} \\ 
\nabla
\end{bmatrix}
\]

We call the \textbf{4-gradient} operator reversed to the 4-divergence, or:
\[ \partial^{-} =
\begin{bmatrix}
\frac{\partial }{\partial t} \\ 
-\nabla
\end{bmatrix}
\]

Let $A(X)$ be an analytic paravector function defined on the set $C^{1+3}$. The spatio-temporal differential operator works on the function $A(X)$ as follows:

\begin{equation}
\partial A(X) =
\begin{bmatrix}
\frac{\partial }{\partial t} \\ 
\nabla
\end{bmatrix}
\begin{bmatrix}
\varphi (X) \\ 
\pmb{\Phi }(X)
\end{bmatrix}=
\begin{bmatrix}
\frac{\partial \varphi}{\partial t}+ \nabla \pmb{\Phi }\\ 
\frac{\partial \pmb{\Phi }}{\partial t}+\nabla \varphi +i\nabla \times \pmb{\Phi}
\end{bmatrix}
\end{equation}
\end{definition}

\begin{example}

In this notation the equations of electricity and magnetism in a vacuum has a form:
               \begin{equation}                    
                    \frac{1}{\epsilon_{0}}
                    \begin{pmatrix}
                    \rho \\ 
                    -\mathbf{j}/c
                    \end{pmatrix}%
                    =%
                    \begin{bmatrix}
                    \frac{\partial }{c\partial t} \\ 
                    \nabla%
                    \end{bmatrix}%
                    \begin{pmatrix}
                    0 \\ 
                    \mathbf{E}+ic\mathbf{B}
                    \end{pmatrix}
                    \qquad \text{i}\qquad 
                    \begin{pmatrix}
                    0 \\ 
                    \mathbf{E}+ic\mathbf{B}
                    \end{pmatrix}%
                    =%
                    \begin{bmatrix}
                    \frac{\partial }{c\partial t} \\ 
                    -\nabla%
                    \end{bmatrix}%
                    \begin{pmatrix}
                    \varphi \\ 
                    -c\mathbf{A}%
                    \end{pmatrix}%
               \end{equation}
  
After the operation as in equation (3) on the left side of (4) we obtain Maxwell's equations in a vacuum, and on the right side we get conditions so that the field around charges should describe a system of wave equations: 
 
\begin{equation}
                    (\frac{\partial^{2}}{\partial t^{2}}  - \nabla^{2} )
                    \begin{pmatrix}
                    \varphi \\ 
                    -c\mathbf{A}%
                    \end{pmatrix}=
                    \begin{bmatrix}
                    \frac{\partial }{c\partial t} \\ 
                    \nabla%
                    \end{bmatrix}%
                    \begin{bmatrix}
                    \frac{\partial }{c\partial t} \\ 
                    -\nabla%
                    \end{bmatrix}%
                    \begin{pmatrix}
                    \varphi \\ 
                    -c\mathbf{A}%
                    \end{pmatrix}%
                    =\frac{1}{\epsilon_{0}}
                    \begin{pmatrix}
                    \rho \\ 
                    -\mathbf{j}/c
                    \end{pmatrix}%
\end{equation}  
\end{example}

From the articles of Professor William Baylis and by the example above, one can see that the paravectors calculus is firmly rooted in physics and therefore it should be looked at it more closely from the mathematical point of view, so that this mathematics could be later applied in practice.

\section{Properties of the operator \texorpdfstring{$\partial$}{partial}}

\begin{theorem}
If paravector function $A(X)$ is analytic and additive, then
\begin{equation}
\partial [A_{1}(X)+A_{2}(X)]=\partial A_{1}(X)+\partial A_{2}(X)
\end{equation}
\end{theorem}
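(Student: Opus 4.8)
The plan is to reduce the statement to two elementary facts. First, adding two \emph{additive} paravector functions is done componentwise, so that $A_1(X)+A_2(X)$ is again an additive paravector function whose scalar part is $\varphi_1+\varphi_2$ and whose vector part is $\pmb{\Phi}_1+\pmb{\Phi}_2$. Second, every operation appearing in the defining formula~(3) --- the time derivative $\partial/\partial t$, the gradient $\nabla\varphi$, the divergence $\nabla\pmb{\Phi}$, and the curl $\nabla\times\pmb{\Phi}$ --- is $\mathbb{C}$-linear. Granting these, the claimed identity is a direct computation, essentially an instance of the distributive law for the paravector product (established in \cite{Radomanski}) combined with linearity of the classical vector-analytic operators.

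Concretely, I would write $A_k(X)=\begin{bmatrix}\varphi_k(X)\\ \pmb{\Phi}_k(X)\end{bmatrix}$ for $k=1,2$, both analytic on $C^{1+3}$, and form the sum $A_1+A_2$. Additivity is used exactly here --- and only here --- to justify that $A_1+A_2=\begin{bmatrix}\varphi_1+\varphi_2\\ \pmb{\Phi}_1+\pmb{\Phi}_2\end{bmatrix}$. Applying the definition~(3) to this sum puts $\frac{\partial}{\partial t}(\varphi_1+\varphi_2)+\nabla(\pmb{\Phi}_1+\pmb{\Phi}_2)$ in the scalar slot and $\frac{\partial}{\partial t}(\pmb{\Phi}_1+\pmb{\Phi}_2)+\nabla(\varphi_1+\varphi_2)+i\,\nabla\times(\pmb{\Phi}_1+\pmb{\Phi}_2)$ in the vector slot. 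I would then invoke linearity of each operator to split every term (for instance $\nabla\times(\pmb{\Phi}_1+\pmb{\Phi}_2)=\nabla\times\pmb{\Phi}_1+\nabla\times\pmb{\Phi}_2$, and likewise for $\partial/\partial t$, $\nabla$ on scalars, and $\nabla$ on vectors) and regroup the resulting summands into two blocks, which by~(3) are precisely $\partial A_1(X)$ and $\partial A_2(X)$.

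To close the argument one must observe that $\partial A_1$ and $\partial A_2$ are again additive paravector functions --- being four-divergences of additive functions, cf.\ \cite{Radomanski} --- so that their sum on the right-hand side is also taken componentwise, making the regrouping legitimate. This bookkeeping about additivity is in fact the only point that requires care: for a non-additive paravector summation is simply not defined, so one cannot write the identity at all unless both the input combination $A_1+A_2$ and the output combination $\partial A_1+\partial A_2$ are componentwise sums. Analyticity plays no role beyond guaranteeing that all the derivatives in~(3) exist, and once the additivity bookkeeping is settled there is no further obstacle; the remainder is the routine distributive/linearity computation sketched above.
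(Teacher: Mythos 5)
Your proof is correct and follows exactly the same route as the paper's, which simply notes that the derivative, gradient, divergence and curl preserve additivity; your version merely spells out the componentwise computation and the bookkeeping about when paravector sums are defined. No issues.
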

\begin{proof}

It is due to the fact that the derivative, gradient, divergence and curl keep the additivity of function.
\end{proof}

\begin{theorem}
Let the scalar function $\rho(x)$ and the paravector function $A(X)$ be analytic and defined on the set $C^{1+3}$, then
\begin{equation} \label{con1}
\partial [\rho(X)A(X)] = [\partial \rho(X)] A(X) + \rho(X) [\partial A(X)]
\end{equation}
\end{theorem}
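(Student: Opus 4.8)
The plan is to verify the product rule by expanding both sides of~\eqref{con1} in components, using the explicit action of $\partial$ given in~(3). Writing $A(X)=\begin{bmatrix}\varphi\\ \pmb{\Phi}\end{bmatrix}$, the product $\rho A$ is the paravector $\begin{bmatrix}\rho\varphi\\ \rho\pmb{\Phi}\end{bmatrix}$, so the left-hand side of~\eqref{con1} becomes
\[
\partial[\rho A]=\begin{bmatrix}\dfrac{\partial(\rho\varphi)}{\partial t}+\nabla\!\cdot\!(\rho\pmb{\Phi})\\[1ex] \dfrac{\partial(\rho\pmb{\Phi})}{\partial t}+\nabla(\rho\varphi)+i\,\nabla\times(\rho\pmb{\Phi})\end{bmatrix}.
\]
The key step is then to apply the four elementary Leibniz rules of vector calculus to each entry: $\partial_t(\rho\varphi)=(\partial_t\rho)\varphi+\rho(\partial_t\varphi)$; $\nabla\!\cdot\!(\rho\pmb{\Phi})=(\nabla\rho)\!\cdot\!\pmb{\Phi}+\rho(\nabla\!\cdot\!\pmb{\Phi})$; $\nabla(\rho\varphi)=(\nabla\rho)\varphi+\rho\nabla\varphi$; and $\nabla\times(\rho\pmb{\Phi})=(\nabla\rho)\times\pmb{\Phi}+\rho(\nabla\times\pmb{\Phi})$. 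Analyticity of $\rho$ and $A$ is what licenses these pointwise derivative identities.

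Next I would regroup the resulting eight terms into two paravectors: those carrying derivatives of $\rho$ and those carrying $\rho$ times derivatives of $A$. The first group reassembles, by the definition of paravector multiplication, into
\[
\begin{bmatrix}\partial_t\rho\\ \nabla\rho\end{bmatrix}\begin{bmatrix}\varphi\\ \pmb{\Phi}\end{bmatrix}=[\partial\rho]\,A,
\]
where one reads off the scalar part $(\partial_t\rho)\varphi+(\nabla\rho)\!\cdot\!\pmb{\Phi}$ and the vector part $(\partial_t\rho)\pmb{\Phi}+(\nabla\rho)\varphi+i(\nabla\rho)\times\pmb{\Phi}$ — exactly the pattern of~(3). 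The second group similarly reassembles into $\rho[\partial A]$, since $\rho$ is scalar and commutes past everything, multiplying the paravector $\partial A$ componentwise. Adding the two gives the right-hand side of~\eqref{con1}.

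The main obstacle, such as it is, is bookkeeping rather than mathematics: one must be careful that the cross-product term $i\,\nabla\times(\rho\pmb{\Phi})$ splits correctly and that the term $(\nabla\rho)\times\pmb{\Phi}$ lands in the vector part of $[\partial\rho]A$ with the right sign, matching the $i\nabla\times\pmb{\Phi}$ slot in the multiplication formula~(3). There is one subtlety worth a remark: unlike Theorem~2.1, here no additivity hypothesis on $A$ is needed, because the identity is purely pointwise (algebraic in the values and their derivatives at each point) — additivity would only matter if we were combining the operator's action across different arguments, which we are not. I would close by noting that the same computation shows the analogous identity holds with $\partial$ replaced by $\partial^{-}$, since only the sign of the $\nabla$-entries changes and every step above respects that sign uniformly.
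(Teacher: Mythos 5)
Your proof is correct and follows essentially the same route as the paper's: expand $\partial[\rho A]$ by the definition in (3), apply the Leibniz rules for $\partial_t$, divergence, gradient and curl (in particular $\nabla\times(\rho\pmb{\Phi})=(\nabla\rho)\times\pmb{\Phi}+\rho\,\nabla\times\pmb{\Phi}$), and regroup the terms into $[\partial\rho]A+\rho[\partial A]$ using the paravector multiplication pattern. Your closing remarks about the $\partial^{-}$ analogue and the irrelevance of additivity are consistent with the paper's own note that ``the operator $\partial^{-}$ holds similar property.''
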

\begin{myproof}
\[\partial [\rho(X)A(X)]=
\begin{bmatrix}
\frac{\partial }{\partial t} \\ 
\nabla
\end{bmatrix}
\begin{bmatrix}
\rho(X) \varphi(X) \\ 
\rho(X) \pmb{\Phi}(X)
\end{bmatrix}
=
\begin{bmatrix}
\frac{\partial (\rho \varphi)}{\partial t}+ \nabla(\rho \pmb{\Phi }) \\ 
\frac{\partial(\rho \pmb{\Phi })}{\partial t}+\nabla(\rho \varphi)+i\nabla \times(\rho \pmb{\Phi })
\end{bmatrix}=
\]
\[
=\begin{bmatrix}
\frac{\partial \rho}{\partial t}\varphi+\rho \frac{\partial  \varphi}{\partial t}
+ \pmb{\Phi }\nabla\rho +\rho\nabla\pmb{\Phi } \\ 
\frac{\partial\rho}{\partial t} \pmb{\Phi }
+\rho \frac{\partial\pmb{\Phi }}{\partial t}
+ \varphi\nabla\rho +\rho  \nabla\varphi
+i\rho (\nabla \times\pmb{\Phi }) + i(\nabla\rho)\times\pmb{\Phi}
\end{bmatrix}=
\begin{bmatrix}
\frac{\partial \rho}{\partial t}\varphi
+ \pmb{\Phi }\nabla\rho \\ 
\frac{\partial\rho}{\partial t} \pmb{\Phi }
+ \varphi\nabla\rho
+ i(\nabla\rho)\times\pmb{\Phi}
\end{bmatrix}+
\begin{bmatrix}
\rho \frac{\partial  \varphi}{\partial t}
+\rho\nabla\pmb{\Phi } \\ 
\rho \frac{\partial\pmb{\Phi }}{\partial t}
+\rho  \nabla\varphi
+i\rho (\nabla \times\pmb{\Phi })
\end{bmatrix}
=\]
\[=
\begin{bmatrix}
\frac{\partial \rho }{\partial t} \\ 
\nabla\rho
\end{bmatrix}
\begin{bmatrix}
\varphi \\ 
\pmb{\Phi}
\end{bmatrix} +
\rho \begin{bmatrix}
\frac{\partial }{\partial t} \\ 
\nabla
\end{bmatrix}
\begin{bmatrix}
\varphi \\ 
\pmb{\Phi}
\end{bmatrix}
\]
\end{myproof}

The operator $\partial^{-}$ holds similar property.

\textbf{Note:} Despite some similarities between the operator $\partial$ and the derivative of function of one variable, the properties of differential operator are not as extensive, for example:

\begin{itemize}
\item The formula \eqref{con1} is not true for the product of two paravector functions.
\item If on the left-hand side of the equation \eqref{con1} we reorder the scalar and paravector function, we can not do it on the right side.
\end{itemize}

Below, four identities will be shown by means of which one can show how to change the equation containing the spatio-temporal differential operator by the paravector transformation. The proofs are not complicated but are somewhat tedious, so we bring forth in detail only the first and third identities. I will take the reader a shortcut while proving the second one and will leave fourth one for self-proving.

\begin{theorem}\label{th:1}

Suppose that $A(X)$ is a paravector analytic function defined on the set $C^{1+3}$ and let the non-singular paravector $\Gamma$ determine the automorphism in the set $C^{1+3}$ so that $X^{\prime} = \Gamma X$, then the following identities are true:

\begin{enumerate}
\item 
$\partial A(X) =\partial ^{\prime }\Gamma A( \Gamma^{-1}X^{\prime })$\label{eq:th1.1}
\item
$\partial ^{-}A\left( X\right) =\Gamma ^{-}\partial ^{\prime -}A\left(\Gamma ^{-1}X^{\prime }\right)$\label{eq:th1.2}
\end{enumerate}
\end{theorem}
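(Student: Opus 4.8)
The plan is to work directly from the coordinate change $X' = \Gamma X$, or equivalently $X = \Gamma^{-1}X'$, and track how the differential operator transforms. The key observation is that $\partial$ is a paravector of first-order partial derivatives with respect to the coordinates of $X$, so changing to the primed coordinates must act on $\partial$ by the chain rule. First I would write $A(X) = A(\Gamma^{-1}X')$ and introduce the shorthand $B(X') := A(\Gamma^{-1}X')$, so that the right-hand side of identity~\ref{eq:th1.1} reads $\partial'\,\Gamma B(X')$. The task is then to show $\partial A(X) = \partial'(\Gamma B(X'))$, and since $\Gamma$ is a constant paravector, one can try to pull it to the left: the real content is a relation of the form $\partial = \partial'\,\Gamma$ (as operators acting appropriately), or more precisely that applying $\partial$ in the $X$-variables equals applying $\partial'$ in the $X'$-variables after multiplication by $\Gamma$.

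The central step is the chain rule for the four-divergence under a linear paravector substitution. Writing $X' = \Gamma X$ componentwise and differentiating, each $\partial/\partial t$ and each component of $\nabla$ in the unprimed frame becomes a linear combination of the primed derivatives, with coefficients read off from the matrix of $\Gamma$. I would organize this computation so that the resulting linear combination is exactly recognized as the paravector product $\partial'\Gamma$ (equivalently $\overline{\Gamma}\,\partial'$ or $\Gamma^{\dagger}\partial'$ in whatever conjugation the ``Algebra of paravectors'' paper uses for the adjoint of left-multiplication). Here the multiplication rule from equation~(3) — the one that produces a divergence in the scalar slot and a gradient plus a curl term in the vector slot — must match up, term by term, with the Jacobian of the substitution. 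This is the place where the precise definition of paravector multiplication and the specific form of $\partial A$ in equation~(3) are essential, and where the ``somewhat tedious'' bookkeeping the author warns about actually lives.

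For identity~\ref{eq:th1.2} I would use the reversion operation that sends $\partial$ to $\partial^{-}$ and $\Gamma$ to $\Gamma^{-}$. Applying the bar/reversion map to identity~\ref{eq:th1.1}, together with the fact (from the algebra paper) that reversion is an anti-automorphism — it reverses the order of products — should turn $\partial A(X) = \partial'\,\Gamma\,A(\Gamma^{-1}X')$ into $\partial^{-}A(X) = \Gamma^{-}\,\partial'^{-}A(\Gamma^{-1}X')$, which is exactly the claim. One subtlety to check is that the substitution itself is compatible: since $X' = \Gamma X$ does not generally equal $\Gamma^{-} X$, I need to confirm that the reversed identity is genuinely about the \emph{same} automorphism $X' = \Gamma X$ and not a different one; this should come out correctly because the coordinates $X$, $X'$ are merely the arguments and it is only the operator/coefficient structure that gets reversed.

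The main obstacle I expect is the first step: verifying that the chain-rule-generated linear combination of primed derivatives reassembles exactly into the paravector product $\partial'\Gamma$ acting on $\Gamma A$, including getting the curl term and the sign conventions right. Once that operator identity $\partial(\,\cdot\,) = \partial'\,\Gamma(\,\cdot\,)$ is established on functions of the form $A\circ\Gamma^{-1}$, both stated identities follow — the first immediately, the second by applying reversion. A secondary point worth a sentence in the write-up is that ``$\Gamma$ determines an automorphism of $C^{1+3}$'' is being used to guarantee $\Gamma^{-1}$ exists and that $X\mapsto\Gamma X$ is a bijection, so that $B(X') = A(\Gamma^{-1}X')$ is well defined and analytic wherever $A$ is.
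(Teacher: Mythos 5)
Your treatment of identity~\ref{eq:th1.1} coincides with the paper's: expand $X'=\Gamma X$ componentwise, apply the chain rule to $\partial/\partial t$ and to each component of $\nabla$, and check that the resulting combination of primed derivatives reassembles into the paravector product $\partial'\Gamma$ acting on $A$ — the paper carries out exactly this bookkeeping, including the curl-term identities you anticipate. Where you genuinely diverge is identity~\ref{eq:th1.2}: the paper does not use reversion but simply repeats the computation with flipped signs (it reuses the same chain-rule formulas and replaces the last partial result by the analogous expansion of $\partial\pmb{\Phi}'/\partial t - i\nabla\times\pmb{\Phi}'$). Your reversion shortcut is cleaner and does work, but only if you apply the anti-automorphism to the \emph{operator} identity $\partial=\partial'\Gamma$ — whose entries ($\partial_{t'},\nabla'$ on one side, the constants $\alpha,\pmb{\beta}$ on the other) all commute, so $(\partial'\Gamma)^{-}=\Gamma^{-}\partial'^{-}$ holds literally — and only afterwards let both sides act on $A$. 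If you instead reverse the full equation $\partial A(X)=\partial'\bigl(\Gamma A(\Gamma^{-1}X')\bigr)$ with the function already in place, the anti-automorphism puts the operator on the right: you get $A^{-}\partial^{-}$, which is not $\partial^{-}A$ (already in the scalar slot, $\partial_t\varphi+\nabla\cdot\pmb{\Phi}$ versus $\partial_t\varphi-\nabla\cdot\pmb{\Phi}$). So state the chain-rule result as the constant-coefficient operator identity first, reverse that, and then apply it to $A$; your worry about the substitution then evaporates, since the coefficients of the operator identity are fixed by the Jacobian of $X\mapsto\Gamma X$ before reversion is applied and reversion only reorders the product. What each approach buys: the paper's route is self-contained but doubles the tedious computation; yours reuses the first computation at the cost of invoking the anti-automorphism property and the commutativity caveat, which you should make explicit in the write-up.
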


\begin{myproof}

Let's expand the equation $X^{\prime }=\Gamma X:$

\begin{flushleft}
\qquad $t^{\prime }=\alpha t+x\beta _{x}+y\beta _{y}+z\beta _{z}$

\qquad $x^{\prime }=t\beta _{x}+\alpha x-iy\beta _{z}+iz\beta _{y}$

\qquad $y^{\prime }=t\beta _{y}+ix\beta _{z}+\alpha y-iz\beta _{x}$

\qquad $z^{\prime }=t\beta _{z}-ix\beta_{y}+iy\beta _{x}+\alpha z$
\end{flushleft}
        
1. We transform the differential expression $\partial A(X)$

\begin{equation} \label{eq:1.1a}
\partial A(X)=\partial A(\Gamma ^{-1}\Gamma X)=\partial A(\Gamma
^{-1}X^{\prime })=
\begin{bmatrix}
\frac{\partial }{\partial t} \\ 
\nabla
\end{bmatrix}
\begin{bmatrix}
\varphi (\Gamma ^{-1}X^{\prime }) \\ 
\pmb{\Phi }(\Gamma ^{-1}X^{\prime })
\end{bmatrix}
=
\begin{bmatrix}
\frac{\partial \varphi ^{\prime }}{\partial t}+ \nabla \pmb{\Phi }^{\prime} \\ 
\frac{\partial \pmb{\Phi }^{\prime }}{\partial t}+\nabla \varphi ^{\prime}+i\nabla \times \pmb{\Phi }^{\prime }
\end{bmatrix}
\end{equation}

where the prime at the symbol of a function means that the argument is the phase $\Gamma^{-1}X^{\prime }$.

Using the formula for the derivative of the composite function we get:

\begin{equation}\label{eq:1.1b}
\frac{\partial \varphi ^{\prime }}{\partial t}=\frac{\partial \varphi
^{\prime }}{\partial t^{\prime }}\frac{\partial t^{\prime }}{\partial t}+\frac{\partial \varphi ^{\prime }}{\partial x^{\prime }}\frac{\partial x^{\prime }}{\partial t}+\frac{\partial \varphi ^{\prime }}{\partial y^{\prime }}\frac{\partial y^{\prime }}{\partial t}+\frac{\partial \varphi^{\prime }}{\partial z^{\prime }}\frac{\partial z^{\prime }}{\partial t}=
\frac{\partial \varphi ^{\prime }}{\partial t^{\prime }}\alpha +\frac{\partial \varphi ^{\prime }}{\partial x^{\prime }}\beta _{x}+\frac{\partial \varphi ^{\prime }}{\partial y^{\prime }}\beta _{y}+\frac{\partial \varphi^{\prime }}{\partial z^{\prime }}\beta _{z}=\frac{\partial \varphi ^{\prime }}{\partial t^{\prime }}\alpha +\pmb{\beta }\nabla ^{\prime }\varphi^{\prime }
\end{equation}

$\begin{array}{ccccc}
\nabla \Phi^{\prime } & =\frac{\partial \Phi _{x}^{\prime }}{\partial t^{\prime }}\frac{\partial t^{\prime }}{\partial x} & +\frac{\partial \Phi _{x}^{\prime }}{\partial x^{\prime }}\frac{\partial x^{\prime }}{\partial x} & +\frac{\partial \Phi
_{x}^{\prime }}{\partial y^{\prime }}\frac{\partial y^{\prime }}{\partial x} & +\frac{\partial \Phi _{x}^{\prime }}{\partial z^{\prime }}\frac{\partial z^{\prime }}{\partial x}+ \\ 
 &+\frac{\partial \Phi _{y}^{\prime }}{\partial t^{\prime }}\frac{\partial t^{\prime }}{\partial y} & +\frac{\partial \Phi _{y}^{\prime }}{\partial x^{\prime }}\frac{\partial x^{\prime }}{\partial y} & +\frac{\partial \Phi_{y}^{\prime }}{\partial y^{\prime }}\frac{\partial y^{\prime }}{\partial y} & +\frac{\partial \Phi _{y}^{\prime }}{\partial z^{\prime }}\frac{\partial z^{\prime }}{\partial y}+ \\ 
 &+\frac{\partial \Phi _{z}^{\prime }}{\partial t^{\prime }}\frac{\partial t^{\prime }}{\partial z} & +\frac{\partial \Phi _{z}^{\prime }}{\partial x^{\prime }}\frac{\partial x^{\prime }}{\partial z} & +\frac{\partial \Phi _{z}^{\prime }}{\partial y^{\prime }}\frac{\partial y^{\prime }}{\partial z} & +\frac{\partial \Phi _{z}^{\prime }}{\partial z^{\prime }}\frac{\partial z^{\prime }}{\partial z}= \\
 \\
 &=\frac{\partial \Phi _{x}^{\prime }}{\partial t^{\prime }}\beta _{x} & +\frac{\partial \Phi _{x}^{\prime }}{\partial x^{\prime }}\alpha & +i\beta _{z}\frac{\partial \Phi _{x}^{\prime }}{\partial y^{\prime }} & -i\beta _{y}\frac{\partial \Phi _{x}^{\prime }}{\partial z^{\prime }}+ \\ 
 & +\frac{\partial \Phi _{y}^{\prime }}{\partial t^{\prime }}\beta _{y} & -i\beta _{z}\frac{\partial \Phi _{y}^{\prime }}{\partial x^{\prime }} & +\frac{\partial \Phi _{y}^{\prime }}{\partial y^{\prime }}\alpha & +i\beta_{x}\frac{\partial \Phi _{y}^{\prime }}{\partial z^{\prime }}+ \\ 
 & +\frac{\partial \Phi _{z}^{\prime }}{\partial t^{\prime }}\beta _{z} & +i\beta _{y}\frac{\partial \Phi _{z}^{\prime }}{\partial x^{\prime }} & -i\beta _{x}\frac{\partial \Phi _{z}^{\prime }}{\partial y^{\prime }} & +\frac{\partial \Phi _{z}^{\prime }}{\partial z^{\prime }}\alpha=
\end{array}
$
\begin{equation} \label{eq:1.1c}
=\pmb{\beta }\frac{\partial \pmb{\Phi }^{\prime }}{\partial t^{\prime}}+\alpha \nabla ^{\prime } \pmb{\Phi }^{\prime }-i\pmb{\beta }\left(\nabla ^{\prime }\times \pmb{\Phi }^{\prime }\right) =\pmb{\beta }\frac{\partial \pmb{\Phi }^{\prime }}{\partial t^{\prime }}+\alpha \nabla^{\prime } \pmb{\Phi }^{\prime }+\nabla ^{\prime }i\left( \pmb{\beta }\times \pmb{\Phi }^{\prime }\right)
\end{equation}
\begin{flushleft}
$\nabla \varphi ^{\prime }=
\begin{bmatrix}
\frac{\partial \varphi ^{\prime }}{\partial t^{\prime }}\frac{\partial
t^{\prime }}{\partial x}+\frac{\partial \varphi ^{\prime }}{\partial
x^{\prime }}\frac{\partial x^{\prime }}{\partial x}+\frac{\partial \varphi
^{\prime }}{\partial y^{\prime }}\frac{\partial y^{\prime }}{\partial x}+
\frac{\partial \varphi ^{\prime }}{\partial z^{\prime }}\frac{\partial
z^{\prime }}{\partial x} \\ 
\frac{\partial \varphi ^{\prime }}{\partial t^{\prime }}\frac{\partial
t^{\prime }}{\partial y}+\frac{\partial \varphi ^{\prime }}{\partial
x^{\prime }}\frac{\partial x^{\prime }}{\partial y}+\frac{\partial \varphi
^{\prime }}{\partial y^{\prime }}\frac{\partial y^{\prime }}{\partial y}+
\frac{\partial \varphi ^{\prime }}{\partial z^{\prime }}\frac{\partial
z^{\prime }}{\partial y} \\ 
\frac{\partial \varphi ^{\prime }}{\partial t^{\prime }}\frac{\partial
t^{\prime }}{\partial z}+\frac{\partial \varphi ^{\prime }}{\partial
x^{\prime }}\frac{\partial x^{\prime }}{\partial z}+\frac{\partial \varphi
^{\prime }}{\partial y^{\prime }}\frac{\partial y^{\prime }}{\partial z}+
\frac{\partial \varphi ^{\prime }}{\partial z^{\prime }}\frac{\partial
z^{\prime }}{\partial z}
\end{bmatrix}
=
\begin{bmatrix}
\frac{\partial \varphi ^{\prime }}{\partial t^{\prime }}\beta _{x}+\frac{%
\partial \varphi ^{\prime }}{\partial x^{\prime }}\alpha +i\beta _{z}\frac{%
\partial \varphi ^{\prime }}{\partial y^{\prime }}-i\beta _{y}\frac{\partial
\varphi ^{\prime }}{\partial z^{\prime }} \\ 
\frac{\partial \varphi ^{\prime }}{\partial t^{\prime }}\beta _{y}-i\beta
_{z}\frac{\partial \varphi ^{\prime }}{\partial x^{\prime }}+\frac{\partial
\varphi ^{\prime }}{\partial y^{\prime }}\alpha +i\beta _{x}\frac{\partial
\varphi ^{\prime }}{\partial z^{\prime }} \\ 
\frac{\partial \varphi ^{\prime }}{\partial t^{\prime }}\beta _{z}+i\beta
_{y}\frac{\partial \varphi ^{\prime }}{\partial x^{\prime }}-i\beta _{x}%
\frac{\partial \varphi ^{\prime }}{\partial y^{\prime }}+\frac{\partial \varphi ^{\prime }}{\partial
z^{\prime }}\alpha%
\end{bmatrix}%
=$
\end{flushleft}
\begin{equation}\label{eq:1.1d}
=\pmb{\beta }\frac{\partial \varphi ^{\prime }}{\partial t^{\prime }}+\alpha \nabla ^{\prime }\varphi ^{\prime }+i\left( \nabla ^{\prime }\varphi^{\prime }\right) \times \pmb{\beta }=\pmb{\beta }\frac{\partial
\varphi ^{\prime }}{\partial t^{\prime }}+\alpha \nabla ^{\prime }\varphi
^{\prime }+i\nabla ^{\prime }\times \left( \pmb{\beta }\varphi ^{\prime
}\right)
\end{equation}
\[
\frac{\partial \pmb{\Phi} ^{\prime }}{\partial t}+i\nabla \times \pmb{\Phi} %
^{\prime }=\frac{\partial \pmb{\Phi} ^{\prime }}{\partial t}+i%
\begin{bmatrix}
\frac{\partial \Phi _{z}^{\prime }}{\partial y}-\frac{\partial \Phi
_{y}^{\prime }}{\partial z} \\ 
\frac{\partial \Phi _{x}^{\prime }}{\partial z}-\frac{\partial \Phi
_{z}^{\prime }}{\partial x} \\ 
\frac{\partial \Phi _{y}^{\prime }}{\partial x}-\frac{\partial \Phi
_{x}^{\prime }}{\partial y}%
\end{bmatrix}%
=\qquad \qquad \qquad \qquad \qquad \qquad \qquad \qquad \qquad \qquad\qquad \qquad \qquad
\]

$=
\begin{bmatrix}
\frac{\partial \Phi _{x}^{\prime }}{\partial t^{\prime }}\frac{\partial
t^{\prime }}{\partial t}+\frac{\partial \Phi _{x}^{\prime }}{\partial
x^{\prime }}\frac{\partial x^{\prime }}{\partial t}+\frac{\partial \Phi
_{x}^{\prime }}{\partial y^{\prime }}\frac{\partial y^{\prime }}{\partial t}+%
\frac{\partial \Phi _{x}^{\prime }}{\partial z^{\prime }}\frac{\partial
z^{\prime }}{\partial t} \\ 
\frac{\partial \Phi _{y}^{\prime }}{\partial t^{\prime }}\frac{\partial
t^{\prime }}{\partial t}+\frac{\partial \Phi _{y}^{\prime }}{\partial
x^{\prime }}\frac{\partial x^{\prime }}{\partial t}+\frac{\partial \Phi
_{y}^{\prime }}{\partial y^{\prime }}\frac{\partial y^{\prime }}{\partial t}+%
\frac{\partial \Phi _{y}^{\prime }}{\partial z^{\prime }}\frac{\partial
z^{\prime }}{\partial t} \\ 
\frac{\partial \Phi _{z}^{\prime }}{\partial t^{\prime }}\frac{\partial
t^{\prime }}{\partial t}+\frac{\partial \Phi _{z}^{\prime }}{\partial
x^{\prime }}\frac{\partial x^{\prime }}{\partial t}+\frac{\partial \Phi
_{z}^{\prime }}{\partial y^{\prime }}\frac{\partial y^{\prime }}{\partial t}+%
\frac{\partial \Phi _{z}^{\prime }}{\partial z^{\prime }}\frac{\partial
z^{\prime }}{\partial t}%
\end{bmatrix}+$

$+i\begin{bmatrix}
\frac{\partial \Phi _{z}^{\prime }}{\partial t^{\prime }}\frac{\partial
t^{\prime }}{\partial y}+\frac{\partial \Phi _{z}^{\prime }}{\partial
x^{\prime }}\frac{\partial x^{\prime }}{\partial y}+\frac{\partial \Phi
_{z}^{\prime }}{\partial y^{\prime }}\frac{\partial y^{\prime }}{\partial y}+%
\frac{\partial \Phi _{z}^{\prime }}{\partial z^{\prime }}\frac{\partial
z^{\prime }}{\partial y}-\frac{\partial \Phi _{y}^{\prime }}{\partial
t^{\prime }}\frac{\partial t^{\prime }}{\partial z}-\frac{\partial \Phi
_{y}^{\prime }}{\partial x^{\prime }}\frac{\partial x^{\prime }}{\partial z}-%
\frac{\partial \Phi _{y}^{\prime }}{\partial y^{\prime }}\frac{\partial
y^{\prime }}{\partial z}-\frac{\partial \Phi _{y}^{\prime }}{\partial
z^{\prime }}\frac{\partial z^{\prime }}{\partial z} \\ 
\frac{\partial \Phi _{x}^{\prime }}{\partial t^{\prime }}\frac{\partial
t^{\prime }}{\partial z}+\frac{\partial \Phi _{x}^{\prime }}{\partial
x^{\prime }}\frac{\partial x^{\prime }}{\partial z}+\frac{\partial \Phi
_{x}^{\prime }}{\partial y^{\prime }}\frac{\partial y^{\prime }}{\partial z}+%
\frac{\partial \Phi _{x}^{\prime }}{\partial z^{\prime }}\frac{\partial
z^{\prime }}{\partial z}-\frac{\partial \Phi _{z}^{\prime }}{\partial
t^{\prime }}\frac{\partial t^{\prime }}{\partial x}-\frac{\partial \Phi
_{z}^{\prime }}{\partial x^{\prime }}\frac{\partial x^{\prime }}{\partial x}-%
\frac{\partial \Phi _{z}^{\prime }}{\partial y^{\prime }}\frac{\partial
y^{\prime }}{\partial x}-\frac{\partial \Phi _{z}^{\prime }}{\partial
z^{\prime }}\frac{\partial z^{\prime }}{\partial x} \\ 
\frac{\partial \Phi _{y}^{\prime }}{\partial t^{\prime }}\frac{\partial
t^{\prime }}{\partial x}+\frac{\partial \Phi _{y}^{\prime }}{\partial
x^{\prime }}\frac{\partial x^{\prime }}{\partial x}+\frac{\partial \Phi
_{y}^{\prime }}{\partial y^{\prime }}\frac{\partial y^{\prime }}{\partial x}+%
\frac{\partial \Phi _{y}^{\prime }}{\partial z^{\prime }}\frac{\partial
z^{\prime }}{\partial x}-\frac{\partial \Phi _{x}^{\prime }}{\partial
t^{\prime }}\frac{\partial t^{\prime }}{\partial y}-\frac{\partial \Phi
_{x}^{\prime }}{\partial x^{\prime }}\frac{\partial x^{\prime }}{\partial y}-%
\frac{\partial \Phi _{x}^{\prime }}{\partial y^{\prime }}\frac{\partial
y^{\prime }}{\partial y}-\frac{\partial \Phi _{x}^{\prime }}{\partial
z^{\prime }}\frac{\partial z^{\prime }}{\partial y}%
\end{bmatrix}%
=$

\noindent $=\begin{bmatrix}
\frac{\partial \Phi _{x}^{\prime }}{\partial t^{\prime }}\alpha +\frac{%
\partial \Phi _{x}^{\prime }}{\partial x^{\prime }}\beta _{x}+\frac{\partial
\Phi _{x}^{\prime }}{\partial y^{\prime }}\beta _{y}+\frac{\partial \Phi
_{x}^{\prime }}{\partial z^{\prime }}\beta _{z} \\ 
\frac{\partial \Phi _{y}^{\prime }}{\partial t^{\prime }}\alpha +\frac{%
\partial \Phi _{y}^{\prime }}{\partial x^{\prime }}\beta _{x}+\frac{\partial
\Phi _{y}^{\prime }}{\partial y^{\prime }}\beta _{y}+\frac{\partial \Phi
_{y}^{\prime }}{\partial z^{\prime }}\beta _{z} \\ 
\frac{\partial \Phi _{z}^{\prime }}{\partial t^{\prime }}\alpha +\frac{%
\partial \Phi _{z}^{\prime }}{\partial x^{\prime }}\beta _{x}+\frac{\partial
\Phi _{z}^{\prime }}{\partial y^{\prime }}\beta _{y}+\frac{\partial \Phi
_{z}^{\prime }}{\partial z^{\prime }}\beta _{z}%
\end{bmatrix}
+i
\begin{bmatrix}
\frac{\partial \Phi _{z}^{\prime }}{\partial t^{\prime }}\beta _{y}-i\beta
_{z}\frac{\partial \Phi _{z}^{\prime }}{\partial x^{\prime }}+\frac{\partial
\Phi _{z}^{\prime }}{\partial y^{\prime }}\alpha +i\beta _{x}\frac{\partial
\Phi _{z}^{\prime }}{\partial z^{\prime }}-\frac{\partial \Phi _{y}^{\prime }%
}{\partial t^{\prime }}\beta _{z}-i\beta _{y}\frac{\partial \Phi
_{y}^{\prime }}{\partial x^{\prime }}+i\beta _{x}\frac{\partial \Phi
_{y}^{\prime }}{\partial y^{\prime }}-\frac{\partial \Phi _{y}^{\prime }}{%
\partial z^{\prime }}\alpha  \\ 
\frac{\partial \Phi _{x}^{\prime }}{\partial t^{\prime }}\beta _{z}+i\beta
_{y}\frac{\partial \Phi _{x}^{\prime }}{\partial x^{\prime }}-i\beta _{x}%
\frac{\partial \Phi _{x}^{\prime }}{\partial y^{\prime }}+\frac{\partial
\Phi _{x}^{\prime }}{\partial z^{\prime }}\alpha -\frac{\partial \Phi
_{z}^{\prime }}{\partial t^{\prime }}\beta _{x}-\frac{\partial \Phi
_{z}^{\prime }}{\partial x^{\prime }}\alpha -i\beta _{z}\frac{\partial \Phi
_{z}^{\prime }}{\partial y^{\prime }}+i\beta _{y}\frac{\partial \Phi
_{z}^{\prime }}{\partial z^{\prime }} \\ 
\frac{\partial \Phi _{y}^{\prime }}{\partial t^{\prime }}\beta _{x}+\frac{%
\partial \Phi _{y}^{\prime }}{\partial x^{\prime }}\alpha +i\beta _{z}\frac{%
\partial \Phi _{y}^{\prime }}{\partial y^{\prime }}-i\beta _{y}\frac{%
\partial \Phi _{y}^{\prime }}{\partial z^{\prime }}-\frac{\partial \Phi
_{x}^{\prime }}{\partial t^{\prime }}\beta _{y}+i\beta _{z}\frac{\partial
\Phi _{x}^{\prime }}{\partial x^{\prime }}-\frac{\partial \Phi _{x}^{\prime }%
}{\partial y^{\prime }}\alpha -i\beta _{x}\frac{\partial \Phi _{x}^{\prime }%
}{\partial z^{\prime }}%
\end{bmatrix}%
=$

\begin{equation}\label{eq:1.1e}
=\alpha \frac{\partial \pmb{\Phi} ^{\prime }}{\partial t^{\prime }}+i\alpha
\nabla ^{\prime }\times \pmb{\Phi }^{\prime }-i\frac{\partial \pmb{
\Phi}^{\prime }}{\partial t^{\prime }}\times \pmb{\beta }+(\nabla
^{\prime }\pmb{\beta \Phi }^{\prime })+\nabla ^{\prime }\times (\pmb{
\Phi}^{\prime }\times \pmb{\beta })
\end{equation}

Substituting partial results \eqref{eq:1.1b} - \eqref{eq:1.1e} into the equation \eqref{eq:1.1a} we receive

\[
\begin{bmatrix}
\frac{\partial }{\partial t} \\ 
\nabla
\end{bmatrix}
\begin{bmatrix}
\varphi (X) \\ 
\pmb{\Phi }(X)
\end{bmatrix}=
\begin{bmatrix}
\frac{\partial \left( \alpha \varphi ^{\prime }+\pmb{\beta \pmb{\Phi} }^{\prime
}\right) }{\partial t^{\prime }}+\nabla ^{\prime }\left( \pmb{\beta }%
\varphi ^{\prime }+\alpha \pmb{\Phi }^{\prime }+i\pmb{\beta }\times 
\pmb{\Phi }^{\prime }\right) \\ 
\frac{\partial }{\partial t^{\prime }}\left( \pmb{\beta }\varphi ^{\prime
}+\alpha \pmb{\Phi }^{\prime }+i\pmb{\beta }\times \pmb{\Phi }%
^{\prime }\right) +\nabla ^{\prime }\left( \alpha \varphi ^{\prime }+\pmb{
\beta \Phi }^{\prime }\right) +i\nabla ^{\prime }\times \left( \pmb{\beta 
}\varphi ^{\prime }+\alpha \pmb{\Phi }^{\prime }+i\pmb{\beta }\times 
\pmb{\Phi }^{\prime }\right)%
\end{bmatrix}
=
\begin{bmatrix}
\frac{\partial }{\partial t^{\prime}} \\ 
\nabla^{\prime}%
\end{bmatrix}%
(\begin{bmatrix}
\alpha \\ 
\pmb{\beta }%
\end{bmatrix}%
\begin{bmatrix}
\varphi ^{\prime } \\ 
\pmb{\Phi }^{\prime }%
\end{bmatrix}),
\]

which completes the proof of 1st identity.

2. To prove the truth of the identity $\partial ^{-}A\left( X\right) =\Gamma ^{-}\partial ^{\prime -}A\left(\Gamma ^{-1}X^{\prime }\right)$, we must use the formulas \eqref {eq:1.1b} - \eqref {eq:1.1d}, and instead of \eqref{eq:1.1e} we must prove that:

\[
\frac{\partial \Phi ^{\prime }}{\partial t}-i\nabla \times \pmb{\Phi }^{\prime }=\alpha \frac{\partial \pmb{\Phi} ^{\prime }}{\partial t^{\prime }}-i\alpha \nabla ^{\prime }\times \pmb{\Phi }^{\prime }-i \pmb{\beta}
\times \frac{\partial \pmb{\Phi }^{\prime }}{\partial t^{\prime }}
+\pmb{\beta} (\nabla ^{\prime }\pmb{\Phi }^{\prime })+\pmb{\beta}
\times (\nabla ^{\prime }\times \pmb{\Phi }^{\prime })
\]
\end{myproof}

\begin{theorem}\label{th:2}

Suppose that $A(X)$ is a paravector analytic function defined on the set $C^{1+3}$ and let the non-singular paravector $\Gamma$ determine the automorphism in the set $C^{1+3}$ so that $X^{\prime} =X \Gamma $, then the following identities are true:

\begin{enumerate}
\item 
$\partial A\left( X\right) =\Gamma \partial ^{\prime }A\left( X^{\prime}\Gamma ^{-1}\right)$\label{eq:th2.1}
\item
$\partial ^{-}A\left( X\right) =\partial ^{\prime -}\Gamma ^{-}A\left(X^{\prime }\Gamma ^{-1}\right)$\label{eq:th2.2}
\end{enumerate}
\end{theorem}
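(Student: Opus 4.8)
The plan is to prove the first identity by a direct computation running parallel to the proof of Theorem~\ref{th:1}, and to obtain the second one from the same computation with $\partial$, $\partial^{\prime}$ replaced everywhere by $\partial^{-}$, $\partial^{\prime-}$.

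First I would write $X^{\prime}=X\Gamma$ in coordinates. By the multiplication rule for paravectors the scalar part of $X\Gamma$ is $\alpha t+x\beta_{x}+y\beta_{y}+z\beta_{z}$ and its vector part is $\alpha\mathbf{x}+t\pmb{\beta}+i\mathbf{x}\times\pmb{\beta}$; since $\mathbf{x}\times\pmb{\beta}=-\pmb{\beta}\times\mathbf{x}$, the coordinate form of $X^{\prime}=X\Gamma$ is exactly the system from the proof of Theorem~\ref{th:1} with the sign of every cross-product contribution reversed:
\begin{flushleft}
\qquad $t^{\prime}=\alpha t+x\beta_{x}+y\beta_{y}+z\beta_{z}$\\
\qquad $x^{\prime}=t\beta_{x}+\alpha x+iy\beta_{z}-iz\beta_{y}$\\
\qquad $y^{\prime}=t\beta_{y}-ix\beta_{z}+\alpha y+iz\beta_{x}$\\
\qquad $z^{\prime}=t\beta_{z}+ix\beta_{y}-iy\beta_{x}+\alpha z$
\end{flushleft}
Writing $A(X)=A(X^{\prime}\Gamma^{-1})$ and denoting by $\varphi^{\prime},\pmb{\Phi}^{\prime}$ its components viewed as functions of $X^{\prime}$, I would then read off the Jacobian $\partial(t^{\prime},x^{\prime},y^{\prime},z^{\prime})/\partial(t,x,y,z)$ and use the chain rule for composite functions to re-express $\partial/\partial t$ and $\nabla$ (acting on $\varphi^{\prime},\pmb{\Phi}^{\prime}$) through $\partial/\partial t^{\prime}$ and $\nabla^{\prime}$. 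Block by block this reproduces the partial results \eqref{eq:1.1b}--\eqref{eq:1.1e}, but with every cross-product and curl term carrying the opposite sign.

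The decisive step is to reassemble $\partial A(X)$ from these blocks and to recognise the resulting paravector. The scalar part comes out directly as $\alpha\left(\frac{\partial\varphi^{\prime}}{\partial t^{\prime}}+\nabla^{\prime}\pmb{\Phi}^{\prime}\right)+\pmb{\beta}\left(\frac{\partial\pmb{\Phi}^{\prime}}{\partial t^{\prime}}+\nabla^{\prime}\varphi^{\prime}+i\nabla^{\prime}\times\pmb{\Phi}^{\prime}\right)$, and --- using that $\pmb{\beta}$ is constant together with the usual product identities for $\nabla$, in particular $\nabla^{\prime}(\pmb{\beta}\pmb{\Phi}^{\prime})=(\pmb{\beta}\nabla^{\prime})\pmb{\Phi}^{\prime}+\pmb{\beta}\times(\nabla^{\prime}\times\pmb{\Phi}^{\prime})$ --- the vector part collapses to $\alpha\left(\frac{\partial\pmb{\Phi}^{\prime}}{\partial t^{\prime}}+\nabla^{\prime}\varphi^{\prime}+i\nabla^{\prime}\times\pmb{\Phi}^{\prime}\right)+\left(\frac{\partial\varphi^{\prime}}{\partial t^{\prime}}+\nabla^{\prime}\pmb{\Phi}^{\prime}\right)\pmb{\beta}+i\pmb{\beta}\times\left(\frac{\partial\pmb{\Phi}^{\prime}}{\partial t^{\prime}}+\nabla^{\prime}\varphi^{\prime}+i\nabla^{\prime}\times\pmb{\Phi}^{\prime}\right)$. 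These are precisely the scalar and vector parts of the paravector product $\Gamma\,\partial^{\prime}A(X^{\prime}\Gamma^{-1})=[\alpha,\pmb{\beta}]\,\bigl([\partial/\partial t^{\prime},\nabla^{\prime}]\,[\varphi^{\prime},\pmb{\Phi}^{\prime}]\bigr)$, which proves the first identity. The one structural difference from Theorem~\ref{th:1} is that here $\Gamma$ multiplies $\partial^{\prime}A^{\prime}$ from the left \emph{after} $\partial^{\prime}$ has been applied, instead of being absorbed into the argument of $\partial^{\prime}$ first; this is forced by $X$ standing on the left of $\Gamma$ in $X^{\prime}=X\Gamma$, so that the Jacobian couples to the differential operator in the left-multiplication pattern.

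For the second identity the same computation goes through with $\partial^{-}$, $\partial^{\prime-}$ in place of $\partial$, $\partial^{\prime}$ --- only the signs in front of the divergence, gradient and curl terms change --- and the same regrouping yields $\partial^{\prime-}\Gamma^{-}A(X^{\prime}\Gamma^{-1})$; as with the fourth of the four identities I would leave this to the reader. The only genuine difficulty in the argument is the sign bookkeeping in the middle step: a large number of cross-product and curl terms are produced by the chain rule, and one must keep their signs straight to see them condense into a single paravector product with $\Gamma$ on the correct side.
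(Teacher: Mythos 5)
Your proposal follows the paper's own proof essentially verbatim: you expand $X^{\prime}=X\Gamma$ into the same four coordinate equations, apply the chain rule to each block of $\partial A$, and regroup the results into the left paravector product $\Gamma\,\partial^{\prime}A(X^{\prime}\Gamma^{-1})$, leaving the second identity to the reader exactly as the paper does. One small caution: your gloss that the intermediate formulas are those of Theorem~\ref{th:1} with \emph{every} cross-product and curl term reversed in sign is not literally uniform (for instance the $i\alpha\nabla^{\prime}\times\pmb{\Phi}^{\prime}$ contribution keeps its sign), but since you state the correct final scalar and vector parts and they match the product $[\alpha,\pmb{\beta}]\,[\partial/\partial t^{\prime},\nabla^{\prime}]\,[\varphi^{\prime},\pmb{\Phi}^{\prime}]$, the argument itself is sound.
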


\begin{myproof}

Let's expand the equation $X^{\prime }= X\Gamma:$

\begin{flushleft}
\qquad $t^{\prime }=\alpha t+x\beta _{x}+y\beta _{y}+z\beta _{z}$

\qquad $x^{\prime }=t\beta _{x}+\alpha x+iy\beta _{z}-iz\beta _{y}$

\qquad $y^{\prime }=t\beta _{y}-ix\beta _{z}+\alpha y+iz\beta _{x}$

\qquad $z^{\prime }=t\beta _{z}+ix\beta y-iy\beta _{x}+\alpha z$
\end{flushleft}

1. We transform the differential expression $\partial A(X)$

\begin{equation} \label{eq:2.1a}
\partial A(X)=\partial A(X\Gamma \Gamma^{-1} )=\partial A(X^{\prime }\Gamma^{-1})=
\begin{bmatrix}
\frac{\partial }{\partial t} \\ 
\nabla
\end{bmatrix}
\begin{bmatrix}
\varphi (X^{\prime }\Gamma^{-1}) \\ 
\pmb{\Phi }(X^{\prime }\Gamma^{-1})
\end{bmatrix}
=
\begin{bmatrix}
\frac{\partial \varphi ^{\prime }}{\partial t}+ \nabla \pmb{\Phi }^{\prime} \\ 
\frac{\partial \pmb{\Phi }^{\prime }}{\partial t}+\nabla \varphi ^{\prime}+i\nabla \times \pmb{\Phi }^{\prime }
\end{bmatrix}
\end{equation}

where the prime at the symbol of a function means that the argument is the phase $X^{\prime }\Gamma^{-1}$

Using the formula for the derivative of the composite function we get:

\begin{equation}\label{eq:2.1b}
\frac{\partial \varphi ^{\prime }}{\partial t}=\frac{\partial \varphi
^{\prime }}{\partial t^{\prime }}\frac{\partial t^{\prime }}{\partial t}+\frac{\partial \varphi ^{\prime }}{\partial x^{\prime }}\frac{\partial x^{\prime }}{\partial t}+\frac{\partial \varphi ^{\prime }}{\partial y^{\prime }}\frac{\partial y^{\prime }}{\partial t}+\frac{\partial \varphi^{\prime }}{\partial z^{\prime }}\frac{\partial z^{\prime }}{\partial t}=
\frac{\partial \varphi ^{\prime }}{\partial t^{\prime }}\alpha +\frac{\partial \varphi ^{\prime }}{\partial x^{\prime }}\beta _{x}+\frac{\partial \varphi ^{\prime }}{\partial y^{\prime }}\beta _{y}+\frac{\partial \varphi^{\prime }}{\partial z^{\prime }}\beta _{z}=\frac{\partial \varphi ^{\prime }}{\partial t^{\prime }}\alpha +\pmb{\beta }\nabla ^{\prime }\varphi^{\prime }
\end{equation}

$\begin{array}{ccccc}
\nabla \Phi^{\prime } & =\frac{\partial \Phi _{x}^{\prime }}{\partial t^{\prime }}\frac{\partial t^{\prime }}{\partial x} & +\frac{\partial \Phi _{x}^{\prime }}{\partial x^{\prime }}\frac{\partial x^{\prime }}{\partial x} & +\frac{\partial \Phi
_{x}^{\prime }}{\partial y^{\prime }}\frac{\partial y^{\prime }}{\partial x} & +\frac{\partial \Phi _{x}^{\prime }}{\partial z^{\prime }}\frac{\partial z^{\prime }}{\partial x}+ \\ 
 &+\frac{\partial \Phi _{y}^{\prime }}{\partial t^{\prime }}\frac{\partial t^{\prime }}{\partial y} & +\frac{\partial \Phi _{y}^{\prime }}{\partial x^{\prime }}\frac{\partial x^{\prime }}{\partial y} & +\frac{\partial \Phi_{y}^{\prime }}{\partial y^{\prime }}\frac{\partial y^{\prime }}{\partial y} & +\frac{\partial \Phi _{y}^{\prime }}{\partial z^{\prime }}\frac{\partial z^{\prime }}{\partial y}+ \\ 
 &+\frac{\partial \Phi _{z}^{\prime }}{\partial t^{\prime }}\frac{\partial t^{\prime }}{\partial z} & +\frac{\partial \Phi _{z}^{\prime }}{\partial x^{\prime }}\frac{\partial x^{\prime }}{\partial z} & +\frac{\partial \Phi _{z}^{\prime }}{\partial y^{\prime }}\frac{\partial y^{\prime }}{\partial z} & +\frac{\partial \Phi _{z}^{\prime }}{\partial z^{\prime }}\frac{\partial z^{\prime }}{\partial z}= \\
 \\
 &=\frac{\partial \Phi _{x}^{\prime }}{\partial t^{\prime }}\beta _{x} & +\frac{\partial \Phi _{x}^{\prime }}{\partial x^{\prime }}\alpha & -i\beta _{z}\frac{\partial \Phi _{x}^{\prime }}{\partial y^{\prime }} & +i\beta _{y}\frac{\partial \Phi _{x}^{\prime }}{\partial z^{\prime }}+ \\ 
 & +\frac{\partial \Phi _{y}^{\prime }}{\partial t^{\prime }}\beta _{y} & +i\beta _{z}\frac{\partial \Phi _{y}^{\prime }}{\partial x^{\prime }} & +\frac{\partial \Phi _{y}^{\prime }}{\partial y^{\prime }}\alpha & -i\beta_{x}\frac{\partial \Phi _{y}^{\prime }}{\partial z^{\prime }}+ \\ 
 & +\frac{\partial \Phi _{z}^{\prime }}{\partial t^{\prime }}\beta _{z} & -i\beta _{y}\frac{\partial \Phi _{z}^{\prime }}{\partial x^{\prime }} & +i\beta _{x}\frac{\partial \Phi _{z}^{\prime }}{\partial y^{\prime }} & +\frac{\partial \Phi _{z}^{\prime }}{\partial z^{\prime }}\alpha=
\end{array}
$
\begin{equation} \label{eq:2.1c}
=\pmb{\beta }\frac{\partial \pmb{\Phi }^{\prime }}{\partial t^{\prime}}+\alpha \nabla ^{\prime } \pmb{\Phi }^{\prime }+i\pmb{\beta }\left(\nabla ^{\prime }\times \pmb{\Phi }^{\prime }\right)
\end{equation}
\begin{flushleft}
$\nabla \varphi ^{\prime }=
\begin{bmatrix}
\frac{\partial \varphi ^{\prime }}{\partial t^{\prime }}\frac{\partial
t^{\prime }}{\partial x}+\frac{\partial \varphi ^{\prime }}{\partial
x^{\prime }}\frac{\partial x^{\prime }}{\partial x}+\frac{\partial \varphi
^{\prime }}{\partial y^{\prime }}\frac{\partial y^{\prime }}{\partial x}+
\frac{\partial \varphi ^{\prime }}{\partial z^{\prime }}\frac{\partial
z^{\prime }}{\partial x} \\ 
\frac{\partial \varphi ^{\prime }}{\partial t^{\prime }}\frac{\partial
t^{\prime }}{\partial y}+\frac{\partial \varphi ^{\prime }}{\partial
x^{\prime }}\frac{\partial x^{\prime }}{\partial y}+\frac{\partial \varphi
^{\prime }}{\partial y^{\prime }}\frac{\partial y^{\prime }}{\partial y}+
\frac{\partial \varphi ^{\prime }}{\partial z^{\prime }}\frac{\partial
z^{\prime }}{\partial y} \\ 
\frac{\partial \varphi ^{\prime }}{\partial t^{\prime }}\frac{\partial
t^{\prime }}{\partial z}+\frac{\partial \varphi ^{\prime }}{\partial
x^{\prime }}\frac{\partial x^{\prime }}{\partial z}+\frac{\partial \varphi
^{\prime }}{\partial y^{\prime }}\frac{\partial y^{\prime }}{\partial z}+
\frac{\partial \varphi ^{\prime }}{\partial z^{\prime }}\frac{\partial
z^{\prime }}{\partial z}
\end{bmatrix}
=
\begin{bmatrix}
\frac{\partial \varphi ^{\prime }}{\partial t^{\prime }}\beta _{x}+\frac{\partial \varphi ^{\prime }}{\partial x^{\prime }}\alpha -i\beta _{z}\frac{\partial \varphi ^{\prime }}{\partial y^{\prime }}+i\beta _{y}\frac{\partial\varphi ^{\prime }}{\partial z^{\prime }} \\ 
\frac{\partial \varphi ^{\prime }}{\partial t^{\prime }}\beta _{y}+i\beta_{z}\frac{\partial \varphi ^{\prime }}{\partial x^{\prime }}+\frac{\partial\varphi ^{\prime }}{\partial y^{\prime }}\alpha -i\beta _{x}\frac{\partial\varphi ^{\prime }}{\partial z^{\prime }} \\ 
\frac{\partial \varphi ^{\prime }}{\partial t^{\prime }}\beta _{z}-i\beta_{y}\frac{\partial \varphi ^{\prime }}{\partial x^{\prime }}+i\beta _{x}\frac{\partial \varphi ^{\prime }}{\partial y^{\prime }}+\frac{\partial \varphi ^{\prime }}{\partial z^{\prime }}\alpha
\end{bmatrix}
=$
\end{flushleft}
\begin{equation}\label{eq:2.1d}
=\pmb{\beta }\frac{\partial \varphi ^{\prime }}{\partial t^{\prime }}+\alpha \nabla ^{\prime }\varphi ^{\prime }+i\pmb{\beta } \times \left( \nabla ^{\prime }\varphi^{\prime }\right)
\end{equation}
\[
\frac{\partial \pmb{\Phi} ^{\prime }}{\partial t}+i\nabla \times \pmb{\Phi} %
^{\prime }=\frac{\partial \pmb{\Phi} ^{\prime }}{\partial t}+i%
\begin{bmatrix}
\frac{\partial \Phi _{z}^{\prime }}{\partial y}-\frac{\partial \Phi
_{y}^{\prime }}{\partial z} \\ 
\frac{\partial \Phi _{x}^{\prime }}{\partial z}-\frac{\partial \Phi
_{z}^{\prime }}{\partial x} \\ 
\frac{\partial \Phi _{y}^{\prime }}{\partial x}-\frac{\partial \Phi
_{x}^{\prime }}{\partial y}%
\end{bmatrix}%
=\qquad \qquad \qquad \qquad \qquad \qquad \qquad \qquad \qquad \qquad\qquad \qquad \qquad
\]

$=
\begin{bmatrix}
\frac{\partial \Phi _{x}^{\prime }}{\partial t^{\prime }}\frac{\partial
t^{\prime }}{\partial t}+\frac{\partial \Phi _{x}^{\prime }}{\partial
x^{\prime }}\frac{\partial x^{\prime }}{\partial t}+\frac{\partial \Phi
_{x}^{\prime }}{\partial y^{\prime }}\frac{\partial y^{\prime }}{\partial t}+%
\frac{\partial \Phi _{x}^{\prime }}{\partial z^{\prime }}\frac{\partial
z^{\prime }}{\partial t} \\ 
\frac{\partial \Phi _{y}^{\prime }}{\partial t^{\prime }}\frac{\partial
t^{\prime }}{\partial t}+\frac{\partial \Phi _{y}^{\prime }}{\partial
x^{\prime }}\frac{\partial x^{\prime }}{\partial t}+\frac{\partial \Phi
_{y}^{\prime }}{\partial y^{\prime }}\frac{\partial y^{\prime }}{\partial t}+%
\frac{\partial \Phi _{y}^{\prime }}{\partial z^{\prime }}\frac{\partial
z^{\prime }}{\partial t} \\ 
\frac{\partial \Phi _{z}^{\prime }}{\partial t^{\prime }}\frac{\partial
t^{\prime }}{\partial t}+\frac{\partial \Phi _{z}^{\prime }}{\partial
x^{\prime }}\frac{\partial x^{\prime }}{\partial t}+\frac{\partial \Phi
_{z}^{\prime }}{\partial y^{\prime }}\frac{\partial y^{\prime }}{\partial t}+%
\frac{\partial \Phi _{z}^{\prime }}{\partial z^{\prime }}\frac{\partial
z^{\prime }}{\partial t}%
\end{bmatrix}+$

$+i\begin{bmatrix}
\frac{\partial \Phi _{z}^{\prime }}{\partial t^{\prime }}\frac{\partial
t^{\prime }}{\partial y}+\frac{\partial \Phi _{z}^{\prime }}{\partial
x^{\prime }}\frac{\partial x^{\prime }}{\partial y}+\frac{\partial \Phi
_{z}^{\prime }}{\partial y^{\prime }}\frac{\partial y^{\prime }}{\partial y}+%
\frac{\partial \Phi _{z}^{\prime }}{\partial z^{\prime }}\frac{\partial
z^{\prime }}{\partial y}-\frac{\partial \Phi _{y}^{\prime }}{\partial
t^{\prime }}\frac{\partial t^{\prime }}{\partial z}-\frac{\partial \Phi
_{y}^{\prime }}{\partial x^{\prime }}\frac{\partial x^{\prime }}{\partial z}-%
\frac{\partial \Phi _{y}^{\prime }}{\partial y^{\prime }}\frac{\partial
y^{\prime }}{\partial z}-\frac{\partial \Phi _{y}^{\prime }}{\partial
z^{\prime }}\frac{\partial z^{\prime }}{\partial z} \\ 
\frac{\partial \Phi _{x}^{\prime }}{\partial t^{\prime }}\frac{\partial
t^{\prime }}{\partial z}+\frac{\partial \Phi _{x}^{\prime }}{\partial
x^{\prime }}\frac{\partial x^{\prime }}{\partial z}+\frac{\partial \Phi
_{x}^{\prime }}{\partial y^{\prime }}\frac{\partial y^{\prime }}{\partial z}+%
\frac{\partial \Phi _{x}^{\prime }}{\partial z^{\prime }}\frac{\partial
z^{\prime }}{\partial z}-\frac{\partial \Phi _{z}^{\prime }}{\partial
t^{\prime }}\frac{\partial t^{\prime }}{\partial x}-\frac{\partial \Phi
_{z}^{\prime }}{\partial x^{\prime }}\frac{\partial x^{\prime }}{\partial x}-%
\frac{\partial \Phi _{z}^{\prime }}{\partial y^{\prime }}\frac{\partial
y^{\prime }}{\partial x}-\frac{\partial \Phi _{z}^{\prime }}{\partial
z^{\prime }}\frac{\partial z^{\prime }}{\partial x} \\ 
\frac{\partial \Phi _{y}^{\prime }}{\partial t^{\prime }}\frac{\partial
t^{\prime }}{\partial x}+\frac{\partial \Phi _{y}^{\prime }}{\partial
x^{\prime }}\frac{\partial x^{\prime }}{\partial x}+\frac{\partial \Phi
_{y}^{\prime }}{\partial y^{\prime }}\frac{\partial y^{\prime }}{\partial x}+%
\frac{\partial \Phi _{y}^{\prime }}{\partial z^{\prime }}\frac{\partial
z^{\prime }}{\partial x}-\frac{\partial \Phi _{x}^{\prime }}{\partial
t^{\prime }}\frac{\partial t^{\prime }}{\partial y}-\frac{\partial \Phi
_{x}^{\prime }}{\partial x^{\prime }}\frac{\partial x^{\prime }}{\partial y}-%
\frac{\partial \Phi _{x}^{\prime }}{\partial y^{\prime }}\frac{\partial
y^{\prime }}{\partial y}-\frac{\partial \Phi _{x}^{\prime }}{\partial
z^{\prime }}\frac{\partial z^{\prime }}{\partial y}%
\end{bmatrix}%
=$

\noindent $=\begin{bmatrix}
\frac{\partial \Phi _{x}^{\prime }}{\partial t^{\prime }}\alpha +\frac{%
\partial \Phi _{x}^{\prime }}{\partial x^{\prime }}\beta _{x}+\frac{\partial
\Phi _{x}^{\prime }}{\partial y^{\prime }}\beta _{y}+\frac{\partial \Phi
_{x}^{\prime }}{\partial z^{\prime }}\beta _{z} \\ 
\frac{\partial \Phi _{y}^{\prime }}{\partial t^{\prime }}\alpha +\frac{%
\partial \Phi _{y}^{\prime }}{\partial x^{\prime }}\beta _{x}+\frac{\partial
\Phi _{y}^{\prime }}{\partial y^{\prime }}\beta _{y}+\frac{\partial \Phi
_{y}^{\prime }}{\partial z^{\prime }}\beta _{z} \\ 
\frac{\partial \Phi _{z}^{\prime }}{\partial t^{\prime }}\alpha +\frac{%
\partial \Phi _{z}^{\prime }}{\partial x^{\prime }}\beta _{x}+\frac{\partial
\Phi _{z}^{\prime }}{\partial y^{\prime }}\beta _{y}+\frac{\partial \Phi
_{z}^{\prime }}{\partial z^{\prime }}\beta _{z}%
\end{bmatrix}
+i
\begin{bmatrix}
\frac{\partial \Phi _{z}^{\prime }}{\partial t^{\prime }}\beta _{y}+i\beta_{z}\frac{\partial \Phi _{z}^{\prime }}{\partial x^{\prime }}+\frac{\partial \Phi _{z}^{\prime }}{\partial y^{\prime }}\alpha -i\beta _{x}\frac{\partial \Phi _{z}^{\prime }}{\partial z^{\prime }}-\frac{\partial \Phi _{y}^{\prime }}{\partial t^{\prime }}\beta _{z}+i\beta _{y}\frac{\partial \Phi
_{y}^{\prime }}{\partial x^{\prime }}-i\beta _{x}\frac{\partial \Phi_{y}^{\prime }}{\partial y^{\prime }}-\frac{\partial \Phi _{y}^{\prime }}{\partial z^{\prime }}\alpha  \\ 
\frac{\partial \Phi _{x}^{\prime }}{\partial t^{\prime }}\beta _{z}-i\beta_{y}\frac{\partial \Phi _{x}^{\prime }}{\partial x^{\prime }}+i\beta _{x} \frac{\partial \Phi _{x}^{\prime }}{\partial y^{\prime }}+\frac{\partial \Phi _{x}^{\prime }}{\partial z^{\prime }}\alpha -\frac{\partial \Phi_{z}^{\prime }}{\partial t^{\prime }}\beta _{x}-\frac{\partial \Phi_{z}^{\prime }}{\partial x^{\prime }}\alpha +i\beta _{z}\frac{\partial \Phi_{z}^{\prime }}{\partial y^{\prime }}-i\beta _{y}\frac{\partial \Phi_{z}^{\prime }}{\partial z^{\prime }} \\ 
\frac{\partial \Phi _{y}^{\prime }}{\partial t^{\prime }}\beta _{x}+\frac{\partial \Phi _{y}^{\prime }}{\partial x^{\prime }}\alpha -i\beta _{z}\frac{\partial \Phi _{y}^{\prime }}{\partial y^{\prime }}+i\beta _{y}\frac{\partial \Phi _{y}^{\prime }}{\partial z^{\prime }}-\frac{\partial \Phi_{x}^{\prime }}{\partial t^{\prime }}\beta _{y}-i\beta _{z}\frac{\partial\Phi _{x}^{\prime }}{\partial x^{\prime }}-\frac{\partial \Phi _{x}^{\prime }}{\partial y^{\prime }}\alpha +i\beta _{x}\frac{\partial \Phi _{x}^{\prime }}{\partial z^{\prime }}%
\end{bmatrix}%
=$

\begin{equation}\label{eq:2.1e}
=\alpha \frac{\partial \pmb{\Phi} ^{\prime }}{\partial t^{\prime }}+i\alpha \nabla ^{\prime }\times \pmb{\Phi }^{\prime } +i\pmb{\beta } \times \frac{\partial \pmb{\Phi}^{\prime }}{\partial t^{\prime }}+\pmb{\beta}(\nabla^{\prime }\pmb{\Phi }^{\prime })-\pmb{\beta } \times (\nabla ^{\prime }\times \pmb{\Phi}^{\prime })
\end{equation}

Substituting all the above partial results \eqref{eq:2.1b} - \eqref{eq:2.1e} into the equation \eqref{eq:2.1a} we obtain

\[
\begin{bmatrix}
\frac{\partial }{\partial t} \\ 
\nabla
\end{bmatrix}
\begin{bmatrix}
\varphi (X) \\ 
\pmb{\Phi }(X)
\end{bmatrix}=
\begin{bmatrix}
\alpha \left( \frac{\partial \varphi ^{\prime }}{\partial t^{\prime }}+\nabla^{\prime} \pmb{\Phi }^{\prime }\right) + \pmb{\beta } \left(\nabla ^{\prime }
\varphi ^{\prime }+\frac{\partial \Phi^{\prime }}{\partial t^{\prime }} +i\nabla^{\prime} \times 
\pmb{\Phi }^{\prime }\right) \\ 
\alpha \left(\nabla ^{\prime }
\varphi ^{\prime }+\frac{\partial \Phi^{\prime }}{\partial t^{\prime }} +i\nabla^{\prime} \times 
\pmb{\Phi }^{\prime }\right)

+\pmb{\beta} (\frac{\partial \varphi ^{\prime }}{\partial t^{\prime }}+\nabla^{\prime} \pmb{\Phi }^{\prime })

+i\pmb{\beta} \times \left(\frac{\partial \Phi^{\prime }}{\partial t^{\prime }} + \nabla ^{\prime }
\varphi ^{\prime } +i\nabla^{\prime} \times 
\pmb{\Phi }^{\prime }\right)
\end{bmatrix}
=
\begin{bmatrix}
\alpha \\ 
\pmb{\beta }%
\end{bmatrix}%
\begin{bmatrix}
\frac{\partial }{\partial t^{\prime}} \\ 
\nabla^{\prime}%
\end{bmatrix}%
\begin{bmatrix}
\varphi ^{\prime } \\ 
\pmb{\Phi }^{\prime }%
\end{bmatrix},
\]

which completes the proof of the 1st identity of Theorem \ref{th:2}.

2. Proof of the identity $\partial ^{-}A\left( X\right) =\partial ^{\prime -}\Gamma ^{-}A\left(X^{\prime }\Gamma ^{-1}\right)$ is left to the reader.

\end{myproof}

\newpage
\begin{theorem} \label{th:1.3}

Let $A(X)$ be an analytic paravector function defined on the set $C^{1+3}$, then for each paravector $\Gamma$ it is true that:

\begin{enumerate}
\item $\partial \left[ A\left(X\right) \Gamma \right] = \left[ \partial A\left( X\right) \right] \Gamma$
\item $\partial ^{-}\left[A\left( X\right) \Gamma \right] = \left[ \partial ^{-}A\left( X\right) \right] \Gamma$
\end{enumerate}
\end{theorem}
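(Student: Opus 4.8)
The plan is to read $\partial$ as a paravector-valued operator acting by left multiplication and to use that $\Gamma$ is constant: on the formal level the claimed identity is just associativity of the paravector product (established in \cite{Radomanski}) combined with the Leibniz rule, since differentiation only reaches $A(X)$ while right multiplication by the fixed $\Gamma$ commutes with it. To turn this into an actual computation I would put $A(X)=\begin{bmatrix}\varphi\\\pmb{\Phi}\end{bmatrix}$ and $\Gamma=\begin{bmatrix}\alpha\\\pmb{\beta}\end{bmatrix}$ with $\alpha,\pmb{\beta}$ independent of $X$, form the paravector product
\[
A(X)\,\Gamma=\begin{bmatrix}\alpha\varphi+\pmb{\beta}\cdot\pmb{\Phi}\\\alpha\pmb{\Phi}+\varphi\pmb{\beta}+i\,\pmb{\Phi}\times\pmb{\beta}\end{bmatrix},
\]
apply the definition (3) of $\partial$ to this paravector, and compare the result term by term with $[\partial A(X)]\,\Gamma$, where $\partial A(X)$ is the paravector given by (3) and the outer product is again the paravector product.

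For the scalar component I would expand $\dfrac{\partial}{\partial t}(\alpha\varphi+\pmb{\beta}\cdot\pmb{\Phi})+\nabla\cdot(\alpha\pmb{\Phi}+\varphi\pmb{\beta}+i\,\pmb{\Phi}\times\pmb{\beta})$, pull the constants $\alpha,\pmb{\beta}$ through the derivatives, and use the elementary identities $\nabla\cdot(\varphi\pmb{\beta})=(\nabla\varphi)\cdot\pmb{\beta}$ and $\nabla\cdot(\pmb{\Phi}\times\pmb{\beta})=(\nabla\times\pmb{\Phi})\cdot\pmb{\beta}$, both valid because $\pmb{\beta}$ is constant; the terms then regroup at once into $\bigl(\tfrac{\partial\varphi}{\partial t}+\nabla\cdot\pmb{\Phi}\bigr)\alpha+\bigl(\tfrac{\partial\pmb{\Phi}}{\partial t}+\nabla\varphi+i\nabla\times\pmb{\Phi}\bigr)\cdot\pmb{\beta}$, which is exactly the scalar part of $[\partial A]\Gamma$. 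For the vector component I would expand $\dfrac{\partial}{\partial t}(\alpha\pmb{\Phi}+\varphi\pmb{\beta}+i\,\pmb{\Phi}\times\pmb{\beta})+\nabla(\alpha\varphi+\pmb{\beta}\cdot\pmb{\Phi})+i\nabla\times(\alpha\pmb{\Phi}+\varphi\pmb{\beta}+i\,\pmb{\Phi}\times\pmb{\beta})$, using the constant-$\pmb{\beta}$ identities $\nabla(\pmb{\beta}\cdot\pmb{\Phi})=(\pmb{\beta}\cdot\nabla)\pmb{\Phi}+\pmb{\beta}\times(\nabla\times\pmb{\Phi})$, $\nabla\times(\varphi\pmb{\beta})=(\nabla\varphi)\times\pmb{\beta}$ and $\nabla\times(\pmb{\Phi}\times\pmb{\beta})=(\pmb{\beta}\cdot\nabla)\pmb{\Phi}-\pmb{\beta}(\nabla\cdot\pmb{\Phi})$.

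The one point needing care — and the only genuine obstacle I foresee — is the directional-derivative terms $(\pmb{\beta}\cdot\nabla)\pmb{\Phi}$ that emerge in the vector component: these are not expressible through the paravector product, so the identity can hold only if they cancel. They do: the $+(\pmb{\beta}\cdot\nabla)\pmb{\Phi}$ produced by $\nabla(\pmb{\beta}\cdot\pmb{\Phi})$ is annihilated by the $-(\pmb{\beta}\cdot\nabla)\pmb{\Phi}$ that the factor $i^{2}=-1$ extracts from $i\nabla\times(i\,\pmb{\Phi}\times\pmb{\beta})$. Once that cancellation is made, the remaining terms reorganize — using $\pmb{\beta}\times(\nabla\times\pmb{\Phi})=-(\nabla\times\pmb{\Phi})\times\pmb{\beta}$ and once more $i^{2}=-1$ — into $\bigl(\tfrac{\partial\varphi}{\partial t}+\nabla\cdot\pmb{\Phi}\bigr)\pmb{\beta}+\alpha\bigl(\tfrac{\partial\pmb{\Phi}}{\partial t}+\nabla\varphi+i\nabla\times\pmb{\Phi}\bigr)+i\bigl(\tfrac{\partial\pmb{\Phi}}{\partial t}+\nabla\varphi+i\nabla\times\pmb{\Phi}\bigr)\times\pmb{\beta}$, which is precisely the vector part of $[\partial A]\Gamma$; together with the scalar check this establishes identity~1. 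For identity~2 I would rerun the computation with $\nabla$ replaced everywhere by $-\nabla$ (that is, $\partial$ by $\partial^{-}$); the crucial cancellation of $(\pmb{\beta}\cdot\nabla)\pmb{\Phi}$ survives this sign change intact, so $\partial^{-}[A(X)\Gamma]=[\partial^{-}A(X)]\Gamma$ follows in exactly the same manner.
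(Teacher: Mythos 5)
Your proposal is correct and follows essentially the same route as the paper: expand the paravector product $A(X)\Gamma$ componentwise, apply the definition of $\partial$, and regroup using the standard nabla identities for a constant $\pmb{\beta}$ (the paper compresses this last step into the phrase ``under property of the nabla operator''). Your explicit tracking of the cancelling $(\pmb{\beta}\cdot\nabla)\pmb{\Phi}$ terms is a welcome clarification of a step the paper leaves implicit, but it is not a different argument.
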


\begin{proof}1.

$\partial \left[ A\left(X\right) \Gamma \right] =
\begin{bmatrix}
\frac{\partial}{\partial t} \\ 
\nabla
\end{bmatrix}
(\begin{bmatrix}
\varphi (X) \\ 
\pmb{\Phi}(X)
\end{bmatrix}
\begin{bmatrix}
\alpha \\ 
\pmb{\beta}
\end{bmatrix})=
\begin{bmatrix}
\frac{\partial}{\partial t} \\ 
\nabla
\end{bmatrix}
\begin{bmatrix}
\alpha \varphi (X) + \pmb{\Phi}(X)\pmb{\beta}\\ 
\alpha\pmb{\Phi}(X)+\pmb{\beta}\varphi (X) +i\pmb{\Phi}(X)\times\pmb{\beta}
\end{bmatrix}=$

\noindent $=\begin{bmatrix}
\alpha \frac{\partial}{\partial t}\varphi(X)
+\pmb{\beta}\frac{\partial}{\partial t}\pmb{\Phi}(X)
+\alpha\nabla\pmb{\Phi}(X)
+\pmb{\beta}\nabla\varphi(X)
+i\nabla[\pmb{\Phi}(X)\times\pmb{\beta}] \\
\alpha\frac{\partial}{\partial t}\pmb{\Phi}(X)
+\pmb{\beta}\frac{\partial}{\partial t}\varphi (X)
+\alpha\nabla\varphi(X)
+i[\frac{\partial}{\partial t}\pmb{\Phi}(X)\times\pmb{\beta}
+\alpha\nabla\times\pmb{\Phi}(X)+
\nabla\varphi(X)\times\pmb\beta]
+\nabla[\pmb{\Phi}(X)\pmb{\beta}]
-\nabla\times[\pmb{\Phi}(X)\times\pmb{\beta}]
\end{bmatrix},
$

hence, under property of the nabla operator we obtain

$=\begin{bmatrix}
[\frac{\partial}{\partial t}\varphi(X)
+\nabla\pmb{\Phi}(X)]\alpha
+[\frac{\partial}{\partial t}\pmb{\Phi}(X)
+\nabla\varphi(X)
+i\nabla\times\pmb{\Phi}(X)]\pmb{\beta} \\
[\frac{\partial}{\partial t}\pmb{\Phi}(X)
+\nabla\varphi(X)
+i\nabla\times\pmb{\Phi}(X)]\alpha
+[\frac{\partial}{\partial t}\varphi(X)
+\nabla\pmb{\Phi}(X)]\pmb{\beta}
+i[\frac{\partial}{\partial t}\pmb{\Phi}(X)
+\nabla\varphi(X)
+i\nabla\times\pmb{\Phi}(X)]\times\pmb{\beta}
\end{bmatrix}=
$

$=\left[ \partial A\left(X\right) \right] \Gamma$

2. Similarly, as the proof of the equation 1.
\end{proof}

Formulas of transformation of the field by the rotation of reference system $X^{\prime} = \Gamma X \Gamma^{-1}$ follow from above results, where the rotation means a more general transformation then Euclidean rotation \cite{Radomanski}.

\begin{example}Rotation of the observer in the field \label{ex2}

Let $\Lambda$ be an orthogonal paravector (ie. det $\lambda=1$) and let the fields $A(X)$ and $B(X)$ satisfy the relationship $\partial A(X)=B(X)$, where $X \in C^{1+3}$. The observer rotates:
\[\partial A(\Lambda ^{-}X^{\prime}\Lambda) = B(\Lambda ^{-}X^{\prime}\Lambda)\qquad, \qquad\qquad \text{where}\qquad X^{\prime} = \Lambda X\Lambda ^{-}\]

In the turned frame the above equation has the form (by Theorems \ref{th:1} and \ref{th:2})
\[\Lambda^{-}\partial^{\prime} \Lambda A(\Lambda ^{-}X^{\prime}\Lambda) = B(\Lambda ^{-}X^{\prime}\Lambda)\]

Multiplying this equation on the left-side by $\Lambda$ and right-side by $\Lambda^{-}$, on the basis of the Theorem \ref{th:1.3}, we obtain an equation of the field after rotation.
\[\partial^{\prime} [\Lambda A(\Lambda ^{-}X^{\prime}\Lambda)\Lambda^{-}] = \Lambda[B(\Lambda ^{-}X^{\prime}\Lambda)]\Lambda^{-},\]

Similarly for the reversed operator (4-gradient). The conclusion is obvious:
\textbf{If the observer turns to one side, the field around it will turn by the same amount in the opposite direction}.
\end{example}

\section{Invariance of wave equation under orthogonal transformation.}

Using the theorems \ref{th:1} - \ref{th:1.3}, we can easily demonstrate the invariance of the wave equation $\square A(X)= B(X)$ under the transformation represented by the orthogonal paravector. We can do this in four ways:

\begin{enumerate}
\item $\square A(X)=\partial ^{-}\partial A(X)=\partial ^{\prime -}\Lambda
^{-}\Lambda \partial ^{\prime }A(X^{\prime }\Lambda ^{-})=\square ^{\prime
}A(X^{\prime }\Lambda ^{-})=B(X^{\prime }\Lambda ^{-}), \qquad$ or
\begin{equation}\label{eq:3.1}
\square A(X)=B(X)\qquad \iff \qquad \square ^{\prime
}A(X^{\prime }\Lambda ^{-})=B(X^{\prime }\Lambda ^{-})
\end{equation}
\item $\square A(X)=\partial \partial ^{-}A(X)=\Lambda \partial ^{\prime}\partial ^{\prime -}[\Lambda ^{-}A(X^{\prime }\Lambda ^{-})]=\Lambda
\square^{\prime }[\Lambda ^{-}A(X^{\prime }\Lambda ^{-})]=B(X^{\prime
}\Lambda ^{-}), \qquad $ hence
\begin{equation}\label{eq:3.2}
\square A(X)=B(X)\qquad \iff \qquad \square^{\prime }[\Lambda
^{-}A(X^{\prime }\Lambda ^{-})]=\Lambda ^{-}B(X^{\prime }\Lambda ^{-})
\end{equation}
\item $\square A(X)=\partial ^{-}\partial A(X)=\Lambda ^{-}\partial ^{\prime -}\partial ^{\prime } \left[ \Lambda A\left( \Lambda ^{-}X^{\prime }\right) \right]=\Lambda
^{-}\square^{\prime }[\Lambda A(\Lambda ^{-}X^{\prime })]=B(\Lambda
^{-}X^{\prime }),\qquad $ hence
\begin{equation}\label{eq:3.3}
\square A(X)=B(X)\qquad \iff \qquad \square^{\prime
}[\Lambda A(\Lambda ^{-}X^{\prime })]=\Lambda B(\Lambda ^{-}X^{\prime })
\end{equation}
\item $\square A(X)=\partial \partial ^{-}A(X)=\partial ^{\prime }\Lambda\Lambda ^{-}\partial ^{\prime -}A(\Lambda ^{-}X^{\prime })=\square ^{\prime
}A(\Lambda ^{-}X^{\prime })=B(\Lambda ^{-}X^{\prime }),\qquad \qquad $ or
\begin{equation}\label{eq:3.4}
\qquad \square A(X)=B(X)\qquad \iff \qquad \square ^{\prime
}A(\Lambda ^{-}X^{\prime })=B(\Lambda ^{-}X^{\prime })
\end{equation}
\end{enumerate}

From the above relationships it can be seen that further discussion can be carried out in different directions. In points 1 and 4 both the equation and the value of the function are invariant.
\[
A^{\prime }=A\qquad \text{i} \qquad B^{\prime }=B
\]

At points 2 and 3 the form of wave equation is invariant, while the values of the function of a field undergo changes:
\begin{itemize}
\item contravariant
\[
A^{\prime}=\Lambda^{-} A \qquad \text{i} \qquad B^{\prime }=\Lambda^{-} B
\]
\item or covariant
\[
A^{\prime }= \Lambda A \qquad \text{i} \qquad B^{\prime }=\Lambda B
\]
\end{itemize}

We encounter an interesting problem that goes beyond the established frame to show in this article, and enters the field of physics, so its analysis will be presented in another paper.

As for the rotation of the observer in the field meeting the wave equation, we get the same result as in example \ref{ex2}.
\[\square A(X)=B(X)\qquad \iff \qquad \square ^{\prime
} \Lambda [A(\Lambda ^{-}X^{\prime }\Lambda)]\Lambda ^{-}=
\Lambda [B(\Lambda ^{-}X^{\prime }\Lambda)]\Lambda ^{-} \]

\section*{Summary}

The wave equation is one of the most important relationships in physics. It underlies the theory of the electromagnetic field and relativistic quantum mechanics, and is applicable in all fields of physics. The considerations presented above and simplicity of calculation show that the paravector calculus fits into this equation naturally and so, it is also natural for relativistic physics. I will by trying to convince the reader in further publications that it is so and that using paravectors we can take a different look at the seemingly well-known phenomenons.    
%----------------------------------------------------------------------------------------
%	BIBLIOGAPHY
%----------------------------------------------------------------------------------------

\bibliographystyle{plain}
% \bibliography{bibliografie}

\begin{thebibliography}{1}

\bibitem{Baylis} Baylis W.E. \emph{Geometry of paravector space with applications to relativistic physics }\href{http://prometheus-us.com/asi/algebra2003/papers/baylis1.pdf}{http://prometheus-us.com/asi/algebra2003/papers/baylis1.pdf}

\bibitem{Baylis_1} Baylis W.E. \emph{Relativity in Introductory Physics } \href{http://arxiv.org/abs/physics/0406158}{arxiv.org/abs/physics/0406158} 

\bibitem{Hestenes} Hestenes D.E. Web Site: \href{http://geocalc.clas.asu.edu}{\em Geometric Calculus, Research and Development}

\bibitem{Radomanski} Radomański J.M. \emph{Algebra of paravectors } \href{http://arxiv.org/abs/1601.02965}{arxiv.org/abs/1601.02965}
 
\bibitem{Synge} Synge J.L.: {\em Quaternions, Lorentz transformations and the Convay-Dirac-Eddington matrices} \\
Gomm. oJ the Dublin Inst. Jor advanced Studies, Series A, 21 1972,1-67. 

\end{thebibliography}

\end{document}